\begin{document}

\title{Bayesian response adaptive randomization design with a composite endpoint of mortality and morbidity}

\author[1]{Zhongying Xu*}

\author[1]{Andriy I. Bandos}

\author[2]{Tianzhou Ma}

\author[1]{Lu Tang}

\author[3]{Victor B. Talisa}

\author[1,4]{Chung-Chou H. Chang}

\authormark{AUTHOR ONE \textsc{et al}}

\address[1]{Department of Biostatistics, School of Public Health, University of Pittsburgh, Pittsburgh, PA, USA}

\address[2]{Department of Epidemiology and Biostatistics, School of Public Health, University of Maryland, College Park, MD, USA}

\address[3]{Department of Critical Care Medicine, School of Medicine, University of Pittsburgh, Pittsburgh, PA, USA }

\address[4]{Department of Medicine, School of Medicine, University of Pittsburgh, Pittsburgh, PA, USA}

\corres{*Chung-Chou H. Chang, University of Pittsburgh,
200 Meyran Avenue,
Suite 200,
Pittsburgh, PA, USA. \email{changj@pitt.edu}}

\abstract[Summary]{Allocating patients to treatment arms during a trial based on the observed responses accumulated prior to the decision point, and sequential adaptation of this allocation, could minimize the expected number of failures or maximize total benefit to patients. In this study, we developed a Bayesian response adaptive randomization (RAR) design targeting the endpoint of organ support-free days (OSFD) for patients admitted to the intensive care units (ICU). The OSFD is a mixture of mortality and morbidity assessed by the number of days of free of organ support within a predetermined time-window post-randomization. In the past, researchers treated OSFD as an ordinal outcome variable where the lowest category is death. We propose a novel RAR design for a composite endpoint of mortality and morbidity, e.g., OSFD, by using a Bayesian mixture model with a Markov chain Monte Carlo sampling to estimate the posterior probability distribution of OSFD and determine treatment allocation ratios at each interim. Simulations were conducted to compare the performance of our proposed design under various randomization rules and different alpha spending functions. The results show that our RAR design using Bayesian inference allocated more patients to the better performing arm(s) compared to other existing adaptive rules while assuring adequate power and type I error rate control across a range of plausible clinical scenarios.}

\keywords{Bayesian, Composite endpoint, Organ support free days, Response adaptive randomization}

\maketitle

\section{Introduction}\label{sec:intro}

To balance between individual ethics and statistical efficiency, most randomized clinical trials use fixed treatment allocation ratios in which patients are assigned among the comparative arms with fixed probability throughout the study \citep{lin2016general}. Though such fixed treatment allocation procedure remains to be the most used allocation rules in practice, more flexible and efficient allocation rules have become increasingly popular. In the response-adaptive randomization (RAR) design, treatment allocation probabilities for each arm are adjusted at different stages of the trial based on the accumulated information of treatment assignments and responses of patients already in the trial. RAR aims to achieve the objective of either maximizing patients' total benefits by allocating more patients to receive the superior treatment or minimizing the expected number of failures for a targeted power \citep{rosenberger2001optimal}. To date, numerous methods for RAR with a single type of endpoint had been developed \citep{wei1978randomized,ivanova2003play,villar2015multi,rosenberger2001optimal,hu2004asymptotic,eisele1995central}. Practical applications of RAR have captured much attention since the COVID-19 pandemic started \citep{angus2020optimizing,magaret2020multigroup,villar2021temptation}.

Instead of using a single type of response, many studies used composite clinical responses as the trial endpoints \citep{anand2009cardiac,cannon2004intensive,shepherd1995prevention}. In some cases, a combination of several relevant events was used to increase the combined event rate and reduce the sample size needed. However, if the components are not of equal clinical importance, i.e., one event is much less severe than the other, the interpretation of such composite response may be unclear \citep{ferreira2007problems,tomlinson2010composite}. To overcome this issue, studies have analyzed each outcome separately and separated the composite response into a primary outcome and a leading secondary outcome. Other studies treated this type of outcome as an ordinal variable and made estimation and inference accordingly. For example, if the outcome of interest was organ support free days (OSFD) for critically ill patients in the intensive care units (ICUs). 
For each of the patient who was discharged alive from ICU, his or her OSFD was calculated up to the end of the study \citep{abdelhady2021effect}. If a patient died during the ICU stay, his or her OSFD was assigned an arbitrary low number (e.g., –1 day) to indicate the worst outcome. 

In spite of the increasing development and implementation of RAR procedures in clinical trials, an RAR procedure that targets a composite endpoint of mortality and morbidity is still lacking. In this study, we propose a novel patient-benefit-oriented Bayesian response-adaptive randomization (BRAR) scheme and apply it to a multicenter, unblinded, phase II or III design with the primary endpoint of a composite endpoint of mortality and morbidity (e.g., OSFD by day 28) among sepsis patients admitted to the ICUs. To handle the composite endpoint, we introduce a Bayesian mixture model to handle the composite endpoint in the RAR design. Moreover, we propose three adaptive allocation rules under the BRAR scheme. Our method also adjusts the stopping boundaries during the course of the trial to control the overall type I error rate.

In Section 2 to Section 5, we illustrate the Bayesian mixture model for RAR with a composite endpoint, and demonstrate the asymptotic properties of the parameter estimators. Simulation study and the results are shown in Section 6. We summarize the work in Section 7.

\section{Model and Parameter Estimate}\label{sec:model}

\subsection{Notation and outcome transformation}
\label{subsec:notation}
Consider an RAR trial with a total of $N$ patients and $J$ decision stages. Patients will be sequentially enrolled in the trial and randomly assigned to receive one of $K$ competing treatments. At the initial stage, roughly equal number of patients will be allocated to each of the treatment arms.  At each of the subsequent decision stages, treatment allocation probabilities will be adapted, and the calculations will be based on the accumulated information on treatment assignments and the responses from patients in the previous stages. Let $n_{jk}$ be the number of patients who are assigned to arm $k$ $(k=0,1,\dots,K)$ at stage $j$ $(j=1,\dots,J)$, where $k=0$ denotes the control arm. Without loss of generality, we assume an equal number of patients distributed to each stage, therefore, each stage consists of $n_j=\sum\nolimits_{k=0}^K n_{jk}=N/J$ patients. 

For each sepsis patient admitted to the ICU, we observe two types of responses: whether the patient dies during the ICU stay and how many days the patient is free of organ support if he or she is discharged alive. Denote by $\tau_{ik}$ the death indicator which equals to 1 if patient $i$ assigned to arm $k$ died during the ICU stay, and 0 otherwise. Denote by $Y_{ik}$ the number of days patient $i$ assigned to arm $k$ was on organ support (a.k.a., the OSFD). If $\tau_{ik}=1$, no OSFD is observed for this patient thus $Y_{ik}$ is undefined. Note that a larger value of OSFD indicates a better prognosis. 

Considering that OSFD is measured by Day 28 after ICU admission, the range of observed $Y_{ik}$ falls in $[1,28]$, with OSFD greater than 28 being censored at 28. Hence, when the observed $Y_{ik}$ is $28$, it corresponds to either the true OSFD is equal to 28 or is greater than 28 but censored at 28. Figure \ref{fig1}(A) depicts the distribution of OSFD on the original scale $Y_{ik}$ from a simulated data set. The overall distribution of $Y_{ik}$ is skewed to the right and has a peak at Day 28 with a large proportion of censoring. $Y_{ik}$ is undefined for death events with $\tau_{ik}=1$.

For modeling convenience, we define a random variable of composite response $D_{ik} = \big\{\lceil \log(30) \rceil -\log(30-Y_{ik})\big\}\times (1-\tau_{ik})$, where $\lceil \log(30) \rceil$ is the least integer greater than $\log(30)$. If a patient died by Day 28, $D_{ik}=0$ by definition. If a patient is alive, $D_{ik}=\lceil \log(30) \rceil-\log(30-Y_{ik})$ is a monotonic increasing function of $Y_{ik}$. The composite-response random variable $D_{ik}=\lceil \log(30) \rceil-\log(2)$ includes those patients with true OSFD of 28 and those with censored OSFD by Day 28. Compared to $ Y_{ik}$, $D_{ik}$ (for $\tau_{ik}=0$) is much more normally distributed with a peak at $\{\lceil \log(30) \rceil-\log(2)\}$ indicating the censoring. See Figure \ref{fig1}(B) for an illustration of the impact of this transformation on the OSFD outcome. 

We denote the observed data collected up to the $(j-1)^{th}$ stage as $\mathfrak{D}_{{j-1}_+}=\{(D_{ik}, \tau_{ik}): j=2, \ldots, 10 ; i=1,\ldots,n_{{j-1}_+k}; k=0,\ldots,K\}$, where $n_{{j-1}_+k}$ denotes the number of patients who were assigned to arm $k$ up to stage ($j-1$).

\begin{figure}
\centering
\includegraphics[scale=0.6]{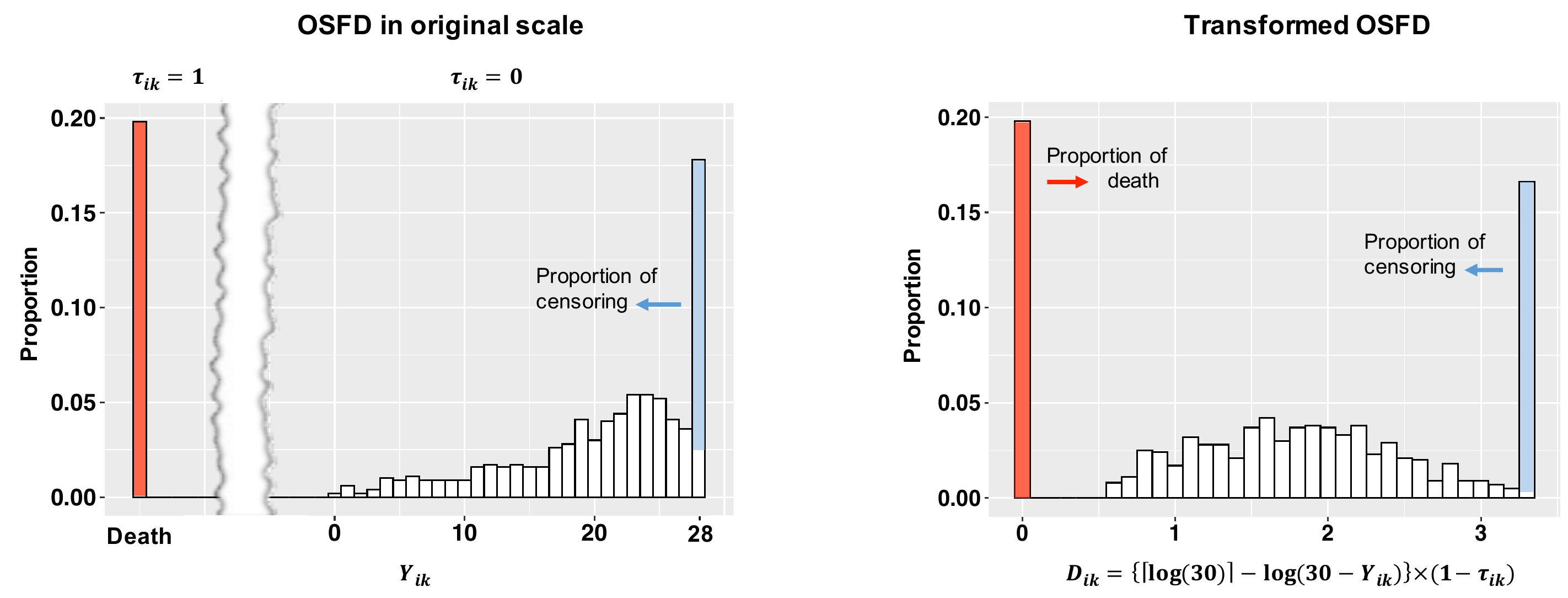}
\caption{(A) OSFD in original scale. (B) Transformed OSFD.}
\label{fig1}
\end{figure}

\subsection{Bayesian mixture model and prior setting}
\label{subsec:model}

Figure \ref{fig2} depicts the conceptual full hierarchical Bayesian model that can be used to model the composite endpoint of mortality and morbidity. Among patients who were discharged alive at the $k$th ($0\le k\le K$) arm, we consider the following probability density function of $D_{ik}$ in our proposed two-component mixture model:
\begin{equation}
\label{method:eq1}
\begin{split} 
 f(D_{ik}|\tau_{ik}=0) & = \omega_k + (1-\omega_k) TN\{D_{ik};\mu_k,\sigma^2_k, \lceil \log(30) \rceil-\log(30),\lceil \log(30) \rceil-\log(2)\}; \\
 \tau_{ik} & \sim Ber(\lambda_k),
 \end{split}
\end{equation} 
where $TN\{\mu_k,\sigma^2_k,\lceil \log(30) \rceil-\log(30),\lceil \log(30) \rceil-\log(2)\}$ represents the probability density function of truncated normal distribution with mean $\mu_k$, variance $\sigma^2_k$, and truncated at $\{\lceil \log(30) \rceil-\log(30)\}$ and $\{\lceil \log(30) \rceil-\log(2)\}$. By definition, $D_{ik}=0$ when $\tau_{ik}=1$. $\omega_k$ represents the proportion of censoring peaking at $D_{ik}=\lceil \log(30) \rceil-\log(2)$. Associated with $D_{ik}$, we have a latent binary indicator variable $Z_{ik}$, which is equal to 1 when $D_{ik}$ is obtained from the censored component and 0 otherwise. By definition from the mixture model, we have $P(Z_{ik}=1)=\omega_{k}$. 

We assume the following noninformative conjugate priors for the parameters in (\ref{method:eq1}): 
\begin{equation}\label{method:eq2}
\begin{split}
& \mu_{k} \sim N(0, 10^4), \hspace{1cm} \sigma^2_{k} \sim IG(10^{-4}, 10^{-4}), \\
& \omega_{k} \sim Unif(0,1), \hspace{1cm}  \lambda_{k} \sim Unif(0,1) \\
\end{split}
\end{equation}

The full conditional distribution of all parameters in the model has analytic forms (see Appendix) and we use the Gibbs sampling algorithm \citep{geman1984stochastic} to generate samples from the posterior distributions for further inference.

\subsection{The Estimand}
\label{subsec:estimand}
To compare treatment effects across arms, we first estimate $\theta_{k}  = E(D_{ik})$, the mean transformed OSFD for arm $k$. From (\ref{method:eq1}), we have:

\begin{equation}
\label{method:eq3}
\begin{split} 
\theta_k=  E(D_{ik}) & = E(D_{ik}|\tau_{ik}=0)P(\tau_{ik}=0)  + E(D_k|\tau_{ik}=1)P(\tau_{ik}=1) \\ 
          &  = E(D_{ik}|\tau_{ik}=0)(1-\lambda_k) + 0 * \lambda_k \\   
          &  = \big[ \omega_k \big\{\lceil \log(30) \rceil-\log(2)\big\} + (1-\omega_k)( \mu_k - \sigma_k  R_{TN,k} )\big] (1-\lambda_k) , \\ 
 \end{split}
\end{equation}
where $R_{TN,k}= \frac{  \phi\{\frac{\lceil \log(30) \rceil-\log(2)-\mu_k}{\sigma_k}\}  - \phi\{\frac{\lceil \log(30) \rceil-\log(30)-\mu_k}{\sigma_k}\} } { \Phi\{\frac{\lceil \log(30) \rceil-\log(2)-\mu_k}{\sigma_k}\}  - \Phi\{\frac{\lceil \log(30) \rceil-\log(30)-\mu_k}{\sigma_k}\}  } $, $\phi(.)$ and $\Phi(.)$ stand for the probability density function and the cumulative distribution function of the standard normal distribution, respectively. Derivation of $Var(D_{ik})$, the variance of the OSFD for arm $k$ can be found in the Appendix. We can estimate $\theta_k$ using its posterior mean:

\begin{equation}
\label{method:eq4}
\hat{\theta}_k = \sum\limits_{b=1}^B \theta^{(b)}_k/ B , 
\end{equation} 
where $B$ is the total number of MCMC iterations after the burn-in period. Based on the parameter estimates at the $b$th iteration and equation (\ref{method:eq3}), $\theta^{(b)}_k$ is the posterior estimate of $\theta_k$.
  
We define our estimand $\xi_k = \theta_k - \theta_0$ as the difference in expected value of $D_{ik}$ between the treatment arm $k$ and the control arm ($k=0$). The estimand can be estimated by taking the posterior mean of the differences for all possible pairs: 
\begin{equation}
\label{method:eq5}
\hat{\xi}_k= \sum\limits_{b=1}^B \sum\limits_{b'=1}^B (\theta^{(b)}_k - \theta^{(b')}_0 ) / B^2, 
\end{equation} 
where $\theta^{(b)}_k$ and $\theta^{(b')}_0$ are the posterior estimates of $\theta_k$ and $\theta_0$ at the $b^{th}$ and $(b')^{th}$ iterations, respectively.

\begin{figure}
\centering
\includegraphics[scale=0.5]{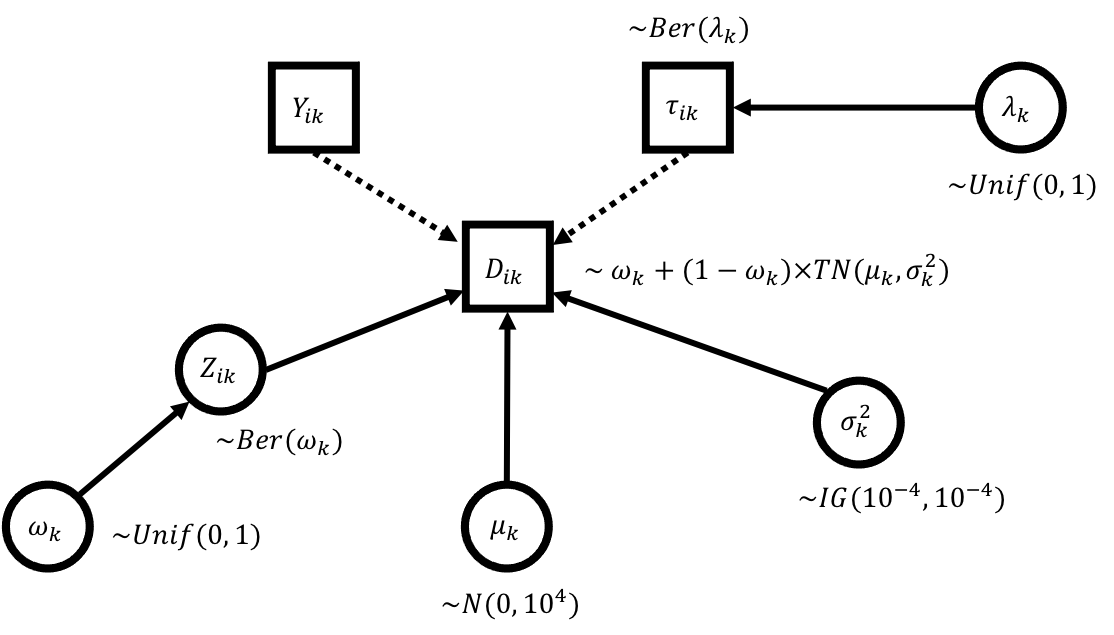}
\caption{Full Bayesian hierarchical model.}
\label{fig2}
\end{figure}

\subsection{Hypothesis setting}
\label{subsec:hypothesis}
For a study of $K$ active treatment arms, we consider the null hypothesis of:
  
$$ H_0: \theta_{k} - \theta_{0} \le \delta, \text{  } \forall{k}\in\{1,2, \dots,K\}, $$
where $\delta$ is a clinically meaningful threshold chosen for the proposed trial of OSFD. We consider the following three alternative hypotheses: 

$$ H_{a1}: \sum\limits_{k=1}^K I( \theta_{k} - \theta_{0} >\delta)\ge 1; $$
$$ H_{a2}:  \sum\limits_{k=1}^K I( \theta_{k} - \theta_{0} >\delta)\ge r, \text{ where } 2\le r \le K-1; $$
$$ H_{a3}: \sum\limits_{k=1}^K I( \theta_{k} - \theta_{0} >\delta)= K,$$
where $I(\cdot)$ is the indicator function. The first alternative hypothesis $H_{a1}$ aims to detect at least one best arm; the second alternative hypothesis $H_{a2}$ aims to detect treatment effects in at least $r$ well-performing arms; and the third hypothesis $H_{a3}$ aims to detect treatment effect in all arms. We evaluate the posterior probability of superiority for each of the three alternative hypotheses in the simulation studies.

\section{Asymptotic properties of the parameter estimate}
\label{subsec:asymptotic}
In this section, we show the asymptotic properties of the posterior mean estimates of $\hat{\theta}_k$ and $\hat{\xi}_k$.

\begin{theorem}
\label{thm1}
(Consistency). Suppose $D_{1k},\ldots,D_{nk}$ are drawn i.i.d. from the correctly specified model in Equation \eqref{method:eq1} $\{p(D_k|\Delta_k),\Delta_k \in \Uptheta \}$ with 
$\mathbf{P}_{\Delta_k^*}$ for some true parameter $\Delta_k^*=(\lambda_k^*,\omega_k^*,\mu_k^*,\sigma^{2*}_k)$. The true parameter $\Delta_k^*$ is drawn from the prior $\mathbf{\pi}(\Delta)$ and $\theta^*_k=g(\Delta^*_k)$ where $g(.)$ is a continuous function specified in Equation \eqref{method:eq3}, then the posterior mean estimate $\hat{\theta}_k \overset{p}{\to} \theta^*_k$. 
\end{theorem}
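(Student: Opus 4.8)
The plan is to split the statement into two stages: first establish posterior consistency for the full parameter vector $\Delta_k=(\lambda_k,\omega_k,\mu_k,\sigma^2_k)$, and then push that consistency through the continuous map $g$ defined in \eqref{method:eq3} to obtain consistency for $\theta_k=g(\Delta_k)$. As a preliminary reduction, note that the Monte Carlo average $\hat\theta_k$ in \eqref{method:eq4} is an ergodic average of a convergent Gibbs sampler, so by the ergodic theorem $\hat\theta_k\to E[g(\Delta_k)\mid\mathfrak{D}]$ as $B\to\infty$. Since the asymptotics in the theorem are taken in the sample size $n$, it suffices to prove that the posterior mean $E[g(\Delta_k)\mid\mathfrak{D}]\overset{p}{\to}\theta_k^*$.

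For the first stage I would invoke Doob's consistency theorem, which is tailored precisely to the hypothesis made here, namely that $\Delta_k^*$ is drawn from the prior $\pi$. Doob's theorem guarantees that, provided the model is identifiable and satisfies mild measurability conditions, the posterior $\Pi(\cdot\mid D_{1k},\dots,D_{nk})$ converges weakly to the point mass $\delta_{\Delta_k^*}$, $\mathbf{P}_{\Delta_k^*}$-almost surely, for $\pi$-almost every $\Delta_k^*$. The crux is therefore to verify identifiability of the two-component mixture in \eqref{method:eq1}, which I would argue componentwise: $\lambda_k$ is pinned down by the marginal death probability $P(\tau_{ik}=1)=\lambda_k$; conditionally on $\tau_{ik}=0$, the distribution is a mixture of an atom at $\lceil\log(30)\rceil-\log(2)$ with mass $\omega_k$ and an absolutely continuous truncated-normal component, and because these two pieces are mutually singular, $\omega_k$ is recovered as the size of the atom while $(\mu_k,\sigma^2_k)$ are recovered from the continuous part using the identifiability of the truncated-normal family on a fixed interval. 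Hence $\Delta_k\mapsto p(\cdot\mid\Delta_k)$ is injective and Doob's theorem applies.

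For the second stage the key observation is that $g$ is not only continuous but bounded. The factors $\omega_k$ and $1-\lambda_k$ lie in $[0,1]$, and the truncated-normal mean $\mu_k-\sigma_k R_{TN,k}$ is by construction confined to the truncation interval $[\,\lceil\log(30)\rceil-\log(30),\,\lceil\log(30)\rceil-\log(2)\,]$, so $\theta_k=g(\Delta_k)$ is bounded uniformly in $\Delta_k$. Boundedness removes any uniform-integrability concern: once the posterior concentrates at $\delta_{\Delta_k^*}$, the definition of weak convergence applied to the bounded continuous function $g$ yields $E[g(\Delta_k)\mid\mathfrak{D}]=\int g\,d\Pi(\cdot\mid\mathfrak{D})\to g(\Delta_k^*)=\theta_k^*$ almost surely, and hence in probability. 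Continuity of $g$ in $(\mu_k,\sigma_k)$ holds for all $\sigma_k>0$, and the prior assigns zero mass to $\sigma_k=0$, so the boundary plays no role.

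The main obstacle is the identifiability verification: mixture models routinely fail to be identifiable, and Doob's theorem is vacuous without it. Here the argument is clean because the atom at $\lceil\log(30)\rceil-\log(2)$ and the continuous truncated-normal component are singular, so no confounding between $\omega_k$ and $(\mu_k,\sigma^2_k)$ can occur; the one subtlety is to confirm that the truncated normal itself places no mass at its endpoints, which guarantees that the observed atom is attributable entirely to $\omega_k$. A secondary point worth flagging is that Doob's conclusion holds only for $\pi$-almost every $\Delta_k^*$; because the theorem already draws $\Delta_k^*$ from $\pi$, the prior-null exceptional set is harmless, which is exactly why the ``drawn from the prior'' assumption is stated.
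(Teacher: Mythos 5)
Your proposal is correct and rests on the same key lemma as the paper --- Doob's consistency theorem --- but it handles the second stage differently, and in a way that is arguably more faithful to the estimator actually being analyzed. The paper's proof identifies $\hat{\theta}_k$ with $g(\hat{\Delta}_k)$ (the continuous map applied to the posterior mean of the parameter) and then invokes the continuous mapping theorem; but the definition in Equation \eqref{method:eq4} makes $\hat{\theta}_k$ a Monte Carlo approximation of the posterior mean of $g(\Delta_k)$, i.e.\ $E\{g(\Delta_k)\mid\mathfrak{D}\}$, which differs from $g\bigl(E\{\Delta_k\mid\mathfrak{D}\}\bigr)$ at finite $n$ by a Jensen gap. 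Your route --- posterior concentration at $\delta_{\Delta_k^*}$ plus integration of the bounded continuous function $g$ against the concentrating posterior --- proves consistency of the quantity that Equation \eqref{method:eq4} actually defines, and your boundedness observation is exactly what licenses passing from weak convergence of the posterior to convergence of its mean. You also fill two gaps the paper leaves implicit: the identifiability of the mixture (which Doob's theorem needs, and which the paper never verifies --- finite dimensionality alone is not enough), argued cleanly via the mutual singularity of the atom at $\lceil\log(30)\rceil-\log(2)$ and the truncated-normal component; and the reduction of the MCMC average to the exact posterior mean as $B\to\infty$. The trade-off: the paper's argument is shorter because its statement of Doob's lemma already asserts posterior-mean consistency of $\Delta_k$, while yours is more self-contained, makes the almost-sure/$\pi$-a.e.\ qualifications explicit, and does not conflate the two distinct point estimators.
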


\begin{proof}
The parameter in the model $\Delta_k=(\lambda_k,\omega_k,\mu_k,\sigma^2_k)$ is of finite dimension, by Lemma 1, the Doob's theorem (see Appendix), the posterior mean estimate $\hat{\Delta}_k$ is a consistent estimator, i.e., $\hat{\Delta}_k \overset{p}{\to} \Delta^*_k $. Since $\hat{\theta}_k= g(\hat{\Delta}_k) $ is a continuous function of $\hat{\Delta}_k$, by the continuous mapping theorem, we have $\hat{\theta}_k \overset{p}{\to} \theta^*_k$.   
\end{proof}

\begin{proposition}
\label{prop1}
Suppose $D_{1k},\ldots,D_{nk}$ are drawn i.i.d. from the correctly specified model in Equation \eqref{method:eq1} $\{p(D_k |\Delta_k),\Delta_k \in \Uptheta \}$ with 
$\mathbf{P}_{\Delta_k^*}$ for some true parameter $\Delta_k^*=(\lambda_k^*,\omega_k^*,\mu_k^*,\sigma^{2*}_k)$ for $0\le k \le K$, and assuming independence between different arms. The true parameter $\Delta_k^*$ is drawn from the prior $\mathbf{\pi}(\Delta)$  and $\theta^*_k=g(\Delta^*_k)$ and $\xi^*_k = \theta^*_k - \theta^*_0 $ for $1\le k \le K$, then the posterior mean estimate $\hat{\xi} \overset{p}{\to} \theta^*_k - \theta^*_0 $. 
\end{proposition}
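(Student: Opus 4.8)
The plan is to reduce Proposition \ref{prop1} to Theorem \ref{thm1} by first exploiting the algebraic structure of the estimator in Equation \eqref{method:eq5}. My first step is to observe that the double average over the MCMC iterations collapses into a difference of single averages. Writing out the double sum and separating the two terms,
\begin{equation*}
\hat{\xi}_k = \frac{1}{B^2}\sum_{b=1}^B\sum_{b'=1}^B\bigl(\theta_k^{(b)} - \theta_0^{(b')}\bigr) = \frac{1}{B}\sum_{b=1}^B \theta_k^{(b)} - \frac{1}{B}\sum_{b'=1}^B \theta_0^{(b')} = \hat{\theta}_k - \hat{\theta}_0,
\end{equation*}
so the pairwise-difference estimator is \emph{exactly} the difference of the two marginal posterior means defined in Equation \eqref{method:eq4}. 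This identity removes any need to analyze the joint behavior of the $B^2$ paired draws.

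With this reduction, the second step is to invoke Theorem \ref{thm1} separately for the treatment arm $k$ and the control arm $0$. Under the stated hypotheses (correctly specified model, i.i.d.\ draws, true parameters generated from the prior, and $g(\cdot)$ continuous as in Equation \eqref{method:eq3}), Theorem \ref{thm1} yields $\hat{\theta}_k \overset{p}{\to} \theta_k^*$ and $\hat{\theta}_0 \overset{p}{\to} \theta_0^*$.

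The final step is to pass from these two marginal convergences to convergence of their difference. Because both limits are constants, marginal convergence in probability of $\hat{\theta}_k$ and $\hat{\theta}_0$ already forces joint convergence of the pair, $(\hat{\theta}_k,\hat{\theta}_0) \overset{p}{\to} (\theta_k^*,\theta_0^*)$; applying the continuous mapping theorem to the subtraction map $(x,y)\mapsto x-y$ then gives $\hat{\xi}_k = \hat{\theta}_k - \hat{\theta}_0 \overset{p}{\to} \theta_k^* - \theta_0^* = \xi_k^*$. I would note that the assumed independence across arms is convenient for describing the joint posterior but is not strictly needed for this conclusion, since convergence to constant limits secures joint convergence regardless.

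I do not anticipate a genuine analytic obstacle: the proposition is essentially a corollary of Theorem \ref{thm1}, and the only step that rewards care is the purely algebraic collapse of the double sum, which makes transparent that the pairwise construction in Equation \eqref{method:eq5} targets the same estimand as the naive difference of posterior means. The one conceptual point worth flagging in the write-up is that the stated independence between arms is stronger than what the argument actually consumes.
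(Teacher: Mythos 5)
Your proof is correct and takes essentially the same route as the paper, which simply states that the proposition ``immediately follows from Theorem \ref{thm1} when the different arms are independent''; your write-up merely fills in the details the paper leaves implicit, namely the algebraic collapse $\hat{\xi}_k=\hat{\theta}_k-\hat{\theta}_0$ of the double sum in Equation \eqref{method:eq5} and the passage from the two marginal consistencies to their difference. Your side remark is also sound: once each posterior mean converges in probability to a constant, a union bound gives convergence of the difference without invoking independence across arms, so that assumption serves only to justify the per-arm model specification rather than the limiting step itself.
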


\noindent This proposition immediately follows from Theorem \ref{thm1} when the different arms are independent. We do not repeat the proof here. 

\begin{theorem}
\label{thm2}
(Asymptotic Normality). For a correctly specified model in Equation \eqref{method:eq1} $\{p(D_k|\Delta_k),\Delta_k \in \Uptheta \}$ with 
$\mathbf{P}_{\Delta_k^*}$ under certain regularity conditions and with Fisher's information matrix $\mathbf{I}_{\Delta^*}$. $\Delta_k^*$ is drawn from the prior $\mathbf{\pi}(\Delta)$ and $\theta^*_k=g(\Delta^*_k)$ where $g(.)$ is a continuous function specified in Equation \eqref{method:eq3}, then the posterior mean estimate follows an asymptotically normal distribution: $\sqrt{n} (\hat{\theta}_k - \theta^*_k) \overset{d}{\to} N\{0, \nabla \theta_k(\Delta_k)^T \mathbf{I}^{-1}_{\Delta^*}  \nabla \theta_k(\Delta_k) \} $, where $\nabla \theta_k(\Delta_k)$ is the gradient of $\theta_k$ with respect to $\Delta_k$.  
\end{theorem}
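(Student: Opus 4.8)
The plan is to combine a Bernstein--von Mises (BvM) argument for the full parameter vector $\Delta_k$ with the multivariate delta method, while taking care that the estimand $\hat\theta_k$ in Equation~\eqref{method:eq4} is the posterior \emph{mean} of $g(\Delta_k)$ rather than $g$ evaluated at the posterior mean of $\Delta_k$. First I would establish, as a lemma parallel to the role Lemma~1 plays for Theorem~\ref{thm1}, the BvM theorem for the finite-dimensional, correctly specified model $\{p(D_k\mid\Delta_k)\}$. Under the stated regularity conditions---identifiability, a prior $\pi(\Delta)$ that is positive and continuous in a neighborhood of $\Delta_k^*$, local asymptotic normality, and a nonsingular Fisher information $\mathbf I_{\Delta^*}$---the posterior law of $\sqrt n(\Delta_k-\hat\Delta_k)$ converges in total variation to $N(0,\mathbf I_{\Delta^*}^{-1})$, and the posterior mean $\hat\Delta_k$ is asymptotically efficient with $\sqrt n(\hat\Delta_k-\Delta_k^*)\overset d\to N(0,\mathbf I_{\Delta^*}^{-1})$. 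In particular the posterior contracts at rate $n^{-1/2}$ around $\Delta_k^*$.

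Next I would transfer this to $\theta_k=g(\Delta_k)$. Writing $\Delta_k=\Delta_k^*+n^{-1/2}U$, where $U$ carries (asymptotically) the $N(0,\mathbf I_{\Delta^*}^{-1})$ posterior law, a first-order Taylor expansion gives $g(\Delta_k)=g(\Delta_k^*)+n^{-1/2}\nabla g(\Delta_k^*)^{\!\top}U+O_p(n^{-1})$. Taking the posterior expectation and recalling $\hat\theta_k=E_{\mathrm{post}}[g(\Delta_k)]$, the linear term yields $\sqrt n(\hat\theta_k-\theta_k^*)=\nabla g(\Delta_k^*)^{\!\top}\,\sqrt n(\hat\Delta_k-\Delta_k^*)+o_p(1)$. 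This simultaneously resolves the posterior-mean-of-$g$ versus $g$-of-posterior-mean discrepancy, since the two differ only at order $O_p(n^{-1})$, and supplies the delta-method linearization. Combining this with the asymptotic normality of $\hat\Delta_k$ and Slutsky's theorem delivers the claimed limit $N\{0,\nabla\theta_k(\Delta_k^*)^{\!\top}\mathbf I_{\Delta^*}^{-1}\nabla\theta_k(\Delta_k^*)\}$, with the gradient understood to be evaluated at the true value $\Delta_k^*$.

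The hard part will be verifying the regularity conditions for BvM in this specific mixture. The density in Equation~\eqref{method:eq1} mixes a point mass (the censoring component at $\lceil\log(30)\rceil-\log(2)$) with a truncated normal, coupled to a Bernoulli death mechanism, so I would need to confirm that the true parameter lies in the interior of $\Uptheta$---in particular $\omega_k^*,\lambda_k^*\in(0,1)$ and $\sigma_k^{2*}>0$, which keeps the model away from the boundary non-regularity typical of mixture weights---and that the model is identifiable despite its mixture structure. Most delicate is showing that $\mathbf I_{\Delta^*}$ is finite and nonsingular, so that $\mathbf I_{\Delta^*}^{-1}$ exists and the LAN expansion is valid. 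I would also need $\nabla\theta_k(\Delta_k^*)\neq 0$ together with enough uniform integrability of $U$ under the posterior to control the remainder and justify interchanging the expectation with the Taylor expansion, so that the stated limit is a genuinely nondegenerate normal distribution.
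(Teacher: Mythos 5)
Your proposal is correct and follows the same backbone as the paper's proof: invoke the Bernstein--von Mises theorem (the paper's Lemma~2 in the Appendix) to obtain $\sqrt{n}(\hat{\Delta}_k - \Delta_k^*) \overset{d}{\to} N(0,\mathbf{I}^{-1}_{\Delta^*})$ for the full four-dimensional parameter, then push this through $g$ via the multivariate delta method. Two points of comparison are worth recording. First, your explicit treatment of the gap between the posterior mean of $g(\Delta_k)$ --- which is what Equation~\eqref{method:eq4} actually defines --- and $g$ evaluated at the posterior mean $\hat{\Delta}_k$ is a genuine refinement: the paper silently identifies the two by writing $\hat{\theta}_k = g(\hat{\Delta}_k)$ inside its proof, whereas your Taylor-expansion-under-the-posterior argument shows the two estimators differ only at $O_p(n^{-1})$, which is exactly what is needed to make that identification harmless at the $\sqrt{n}$ scale. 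Second, the bulk of the paper's proof consists of computations you defer: after noting that BvM applies, it writes out the full likelihood of the mixture in Equation~\eqref{method:eq1}, derives the score, the entries of the $4\times 4$ Fisher information matrix $I(\Delta_k)$, and the components of the gradient $\nabla\theta_k(\Delta_k)$, since the paper views the concrete form of the asymptotic variance as the main content of the proof. Conversely, the regularity verification you rightly flag as the hard part --- interior true parameter ($\omega_k^*,\lambda_k^*\in(0,1)$, $\sigma_k^{2*}>0$), identifiability of the mixture, finiteness and nonsingularity of $\mathbf{I}_{\Delta^*}$, and uniform integrability to control the Taylor remainder under the posterior --- is simply asserted in the paper (``the model meets the general regularity conditions''), so your write-up is more candid about where the unproved assumptions sit, but, like the paper, it does not actually carry that verification out.
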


\noindent We show a proof of this theorem and a derivation of the asymptotic variance in the Appendix. 

\begin{proposition}
For a correctly specified model in Equation \eqref{method:eq1} $\{p(D_k|\Delta_k),\Delta_k \in \Uptheta \}$ with 
$\mathbf{P}_{\Delta_k^*}$ under certain regularity conditions and with Fisher's information matrix $\mathbf{I}_{\Delta^*_k}$ for $0\le k \le K$ and assuming independence between different arms. $\Delta_k^*$ is drawn from the prior $\mathbf{\pi}(\Delta)$ and $\theta^*_k=g(\Delta^*_k)$ and $\xi^*_k = \theta^*_k - \theta^*_0 $ for $1\le k \le K$, then the posterior mean estimate follows an asymptotically normal distribution: $\sqrt{n} (\hat{\xi}_k - \xi^*_k) \overset{d}{\to} N\big\{0, \nabla \theta_k(\Delta_k)^T \mathbf{I}^{-1}_{\Delta_k^*}  \nabla \theta_k(\Delta_k) + \nabla \theta_0(\Delta_0)^T \mathbf{I}^{-1}_{\Delta_0^*}  \nabla \theta_k(\Delta_k) \big\} $, where $\nabla \theta_k(\Delta_k)$ and $\nabla \theta_0(\Delta_0)$ are the gradients of $\theta_k$ and $\theta_0$ with respect to $\Delta_k$ and $\Delta_0$, respectively. 
\end{proposition}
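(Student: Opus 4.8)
The plan is to reduce the statement to the single-arm asymptotic normality already proved in Theorem \ref{thm2}, combined with the across-arm independence hypothesis. The first thing I would record is an algebraic identity showing that the double-sum estimator in Equation \eqref{method:eq5} collapses to a plain difference of the per-arm estimators:
\[
\hat{\xi}_k = \frac{1}{B^2}\sum_{b=1}^B\sum_{b'=1}^B (\theta_k^{(b)} - \theta_0^{(b')}) = \frac{1}{B}\sum_{b=1}^B \theta_k^{(b)} - \frac{1}{B}\sum_{b'=1}^B \theta_0^{(b')} = \hat{\theta}_k - \hat{\theta}_0.
\]
Since $\xi_k^* = \theta_k^* - \theta_0^*$, this gives the exact decomposition $\sqrt{n}(\hat{\xi}_k - \xi_k^*) = \sqrt{n}(\hat{\theta}_k - \theta_k^*) - \sqrt{n}(\hat{\theta}_0 - \theta_0^*)$, so the problem becomes one about the limiting law of a difference of two centered, scaled posterior-mean estimators.

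Next I would invoke Theorem \ref{thm2} once on arm $k$ and once on arm $0$ to obtain the two marginal limits $\sqrt{n}(\hat{\theta}_k - \theta_k^*)\overset{d}{\to} N(0,V_k)$ and $\sqrt{n}(\hat{\theta}_0 - \theta_0^*)\overset{d}{\to} N(0,V_0)$, with $V_k = \nabla\theta_k(\Delta_k)^T \mathbf{I}_{\Delta_k^*}^{-1}\nabla\theta_k(\Delta_k)$ and $V_0 = \nabla\theta_0(\Delta_0)^T \mathbf{I}_{\Delta_0^*}^{-1}\nabla\theta_0(\Delta_0)$. The decisive step is promoting these two marginal convergences to a \emph{joint} convergence of the pair, and this is exactly where the independence between arms is used. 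Because the two estimators are independent for every $n$, the joint characteristic function factorizes as $\varphi_n(s,t)=\varphi_{k,n}(s)\,\varphi_{0,n}(t)$, and letting $n\to\infty$ yields $\varphi(s,t) = e^{-V_k s^2/2}\,e^{-V_0 t^2/2}$, the characteristic function of a bivariate normal with diagonal covariance $\mathrm{diag}(V_k,V_0)$. By L\'evy continuity this establishes $(\sqrt{n}(\hat{\theta}_k-\theta_k^*),\,\sqrt{n}(\hat{\theta}_0-\theta_0^*))\overset{d}{\to} N(0,\mathrm{diag}(V_k,V_0))$.

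Finally I would apply the continuous mapping theorem to the linear map $(x,y)\mapsto x-y$ (equivalently a Cram\'er--Wold argument with the contrast $(1,-1)^T$), obtaining $\sqrt{n}(\hat{\xi}_k - \xi_k^*)\overset{d}{\to} N(0, V_k + V_0)$; the vanishing of any cross term is the direct payoff of the zero off-diagonal covariance from the previous step. This recovers the stated variance $\nabla\theta_k^T\mathbf{I}_{\Delta_k^*}^{-1}\nabla\theta_k + \nabla\theta_0^T\mathbf{I}_{\Delta_0^*}^{-1}\nabla\theta_0$, correcting the apparent typographical substitution of $\nabla\theta_k$ for $\nabla\theta_0$ in the second summand of the statement. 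The main obstacle is conceptual rather than computational: marginal convergence in distribution does not by itself imply joint convergence, so the independence hypothesis must be invoked explicitly to factorize the characteristic functions. I would also flag that this across-arm independence is a genuine assumption in an RAR design, since the adaptive allocation couples the arms through the shared accumulated data $\mathfrak{D}_{{j-1}_+}$, and the result as stated is conditional on that assumption holding.
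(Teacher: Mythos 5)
Your proof is correct and takes essentially the same route as the paper, whose entire argument is the remark that the result ``immediately follows from Theorem 2 when the different arms are independent'' (with the variance derivation deferred to the Appendix); you simply supply the steps the paper leaves implicit --- the collapse of the double-sum estimator in Equation \eqref{method:eq5} to $\hat{\theta}_k - \hat{\theta}_0$, the upgrade from marginal to joint convergence via independence, and the continuous mapping step. Your identification of the second variance summand as a typographical error --- it should read $\nabla \theta_0(\Delta_0)^T \mathbf{I}^{-1}_{\Delta_0^*} \nabla \theta_0(\Delta_0)$ rather than ending in $\nabla \theta_k(\Delta_k)$ --- is also correct, as is your caveat that across-arm independence is a substantive assumption under response-adaptive allocation.
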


\noindent This proposition immediately follows from Theorem \ref{thm2} when the different arms are independent. We also provide a derivation of the asymptotic variance in the Appendix. 

\section{Interim analysis and adaptation rules}\label{sec:rules}
The purpose of our Bayesian adaptive design is to identify the best performing treatment arm and allocate more patients to this arm. To reward the well-performing arm(s) and remove poor-performing arm(s), we consider three types of adaptation at each interim analysis stage: (a) re-estimation of the treatment allocation ratios, (b) arm suspension due to poor performance, and (b) early trial stopping due to efficacy. In the following subsections, we describe how to implement each of these adaptations in details.

\subsection{Re-estimation of the treatment allocation ratios}\label{subsec:rar}
RAR is applied here so more patients can be assigned the better treatment arm. At each interim stage $j$, the allocation probabilities for all treatment arms ($1\le k\le K$) are updated to be proportional to the posterior probability that the arm is the best arm: 

\begin{equation}
\label{rule:eq1}
 p_{k,j} = P(\theta_{k} = \max_{1\le h\le K} \theta_{h} | \mathfrak{D}_{j-1_+} )  = P(\bigcap\limits_{h\neq k, 1\le h\le K} \theta_{k} > \theta_{h}  | \mathfrak{D}_{j-1_+}), 
\end{equation}
where $\mathfrak{D}_{j-1_+}$ is the data collected up to ${(j-1)}^{th}$ stage. This posterior probability can be estimated based on the posterior samples acquired from the Bayesian model on $\mathfrak{D}_{j-1_+}$. 

With the estimated $\hat{p}_{k,j}$, we propose the following three RAR rules depending on the proportion of patients assigned to the control arm: \\

\noindent\textbf{RAR Rule I:}
\begin{enumerate}
\item Fix the allocation proportion for the control arm to $1/(K+1)$. 
\item For each treatment arm $k$ ($1\le k\le K$), compute $\hat{p}_{k,j}$.
\item Scale $\hat{p}_{k,j}$ so $\sum\limits_{k=1}^K \hat{p}_{k,j}=1$. Allocate the remaining $K/(K+1)$ of samples to each treatment arm proportional to the corresponding scaled probability. 
\end{enumerate}

\noindent\textbf{RAR Rule II:}
\begin{enumerate}
 \item Compute the scaled $\hat{p}_{k,j}$ as in Rule I. 
 \item Allocate to the best arm first, with the maximum allocation probability bounded by 0.8 to avoid extremely low numbers in the control arm.
\item For the remaining samples, allocate to the control arm with the average proportion, i.e., allocate $1/K$ of the remaining samples to the control arm.    
\item For the remaining samples, allocate to the other treatment arms with the corresponding scaled probabilities. 
\end{enumerate}

\noindent\textbf{RAR Rule III:}
\begin{enumerate}
     \item Compute the scaled $\hat{p}_{k,j}$ as in Rule I.
     \item Assign patients to the control arm with the same probability as the best arm. 
     \item Rescale all probabilities again and allocate the remaining samples to each arm with the corresponding rescaled probabilities. 
\end{enumerate}

\begin{remark}
Rule I ensures that the control arm is assigned with sufficient samples; Rule II gives more rewards to the best performing arm; Rule III well balances the best performing arm and the control arm to achieve the maximum power. Rescaling of $\hat{p}_{k,j}$ helps avoid extreme allocation probabilities. 
\end{remark}

\subsection{Arm suspension}\label{subsec:suspension}
When an arm has a very low posterior probability of being the best arm at some interim stage, we will consider suspending the arm, that is, no patients will be randomized to this arm at this stage. In our study, we consider using the threshold $\hat{p}_{k,j}<0.05$ as the criteria for suspension of treatment arm $k$ at the $j^{th}$ stage but other thresholds can also be assessed via sensitivity analysis. Note that the suspended arm may re-enter the study in a later stage. The control arm is never suspended regardless of its posterior probability. For the remaining arms, we re-allocate samples following the aforementioned RAR rules.

\subsection{Early stopping due to efficacy}\label{subsec:earlystop}

Study will stop early for efficacy if the posterior probability of difference between the best arm and control arm greater than a clinically meaningful difference is high, i.e., we evaluate the following at each interim stage $j$: 
\begin{equation}
\label{rule:eq2}
P(\theta_{best} - \theta_{0} >\delta | \mathfrak{D}_{j-1_+} ) > c_j ,
\end{equation} 
where $\theta_{best}= \max_{1\le k\le K} \theta_k$ is the mean transformed OSFD for the best arm, $\delta$ is the clinically meaningful difference chosen in our OSFD study. $c_j$ is the critical value used at $j^{th}$ interim stage. The critical value is changing across stages, we propose a procedure to calculate the corresponding critical value at each stage using different alpha spending functions to control for overall type I error. We use the posterior samples acquired from the Bayesian model on $\mathfrak{D}_{j-1_+}$ to estimate the corresponding posterior probabilities. 

Denote by $n_{j_+}$ the total number of patients assigned up to the $j^{th}$ stage and $t_j=n_{j_+}/N=j/J$ the information fraction at the $j^{th}$ interim analysis, in this study, we adapt the following three alpha spending functions, where $\alpha$ is the overall type I error rate. For a one-sided test, we set $\alpha=0.025$. 

\begin{enumerate}
  \item Pocock alpha spending function \citep{pocock1977group}: 
  $$ \alpha(t_j) = \alpha \log\{ 1 + (e-1)t_j  \}. $$ 
  \item O'Brien-Fleming (OF) alpha spending function \citep{o1979multiple}: 
  $$ \alpha(t_j) = 2 - 2\Phi(z_\frac{\alpha}{2} / \sqrt{t_j}), $$  where $\Phi$ is the CDF of the standard normal distribution.
  \item Power alpha spending function \citep{gordon1983discrete}: 
  $$ \alpha(t_j) = (t_j)^\gamma\alpha, $$ 
  where $\gamma$ is the power and in this study, we set $\gamma =1$.   
\end{enumerate}

At each interim stage $j$, we propose the simulation-based procedure summarized in Algorithm 1 to calculate the critical value $c_j$ with respect to the alpha spending function $\alpha(t_j)$. The procedure is adopted from the idea of \cite{zhu2017bayesian}. \\

\begin{algorithm}
\caption{Algorithm to calculate the critical value $c_j$}\label{alg1}
\begin{algorithmic}
\State \noindent\textbf{Data:} Simulate each of the $N_{rep}$ datasets $\mathfrak{D}^{m}=\{(D^{m}_{ij},\tau^{m}_{ij}):i=1,\ldots,n_{j}, j=1,\ldots,J \}$, $m=1,\ldots,N_{rep}$ for a treatment arm (any $k \in \{1,2,\ldots,K\}$) and the control arm ($k=0$) under $ H_0: \theta_{k}=\theta_{0}$, where $N_{rep}=10,000$, $n_j=200$, and $J=10$.\\
\State \noindent\textbf{Input:} $\mathbf{P}_1=(\vec{P}^1, \vec{P}^2,\ldots, \vec{P}^{N_{rep}})'$, $N_{rep} \times J$ matrix \\
\For $m=1, 2, \ldots, N_{rep}$ 
 \For $j=1,2,\ldots,J$
    \State  Calculate the posterior probability $P_j(\theta_{k} - \theta_{0} > \delta | \mathfrak{D}^{m}_{j-1_+} )$. 
    \State Set $\vec{P}^m= \{P_1(\theta_{k} - \theta_{0} > \delta), \ldots,P_J(\theta_{k} - \theta_{0} > \delta) \}' $ 
  \EndFor $j=1,2,\ldots,J$        
 \For $j = 1$
 \State Find $c_1= P^{\ast}(t_1)=\{1-\alpha(t_1)\}^{th} $ quantile of the first column of the matrix $\mathbf{P}_1$, where $t_1=1/J=1/10$. 
 \EndFor $j=1$
 \For $j = 2, 3, ..., J$
 \State Set $\mathbf{P}_j$ as a matrix composed of the rows of $\mathbf{P}_{j-1}$ such that the $(j– 1)^{th} $ element of the row be smaller than or equal to $P^{\ast}(t_{j-1})$.  
  \State Find $c_j = P^{\ast}(t_j) = \{1-\Delta \alpha (t_j)\}^{th}$ quantile of the $j^{th}$ column of matrix $\mathbf{P}_j$, where $\Delta \alpha (t_j) = \alpha (t_j) - \alpha (t_{j-1})$. 
 \EndFor $j = 2, 3, ..., J$    
  \EndFor $m=1, 2, \ldots, N_{rep}$
\State \noindent\textbf{Output:} $c_j$
\end{algorithmic}
\end{algorithm}

\subsection{Other RAR adaptation rules for comparison}

We compared our proposed RAR treatment allocation rules to the following three other rules that have been used in the Bayesian RAR design with different formulas to calculate the posterior probabilities $p_{k,j}$ at each interim stage $j$ under the same Bayesian framework:
\begin{itemize}
\item Fixed randomization (FR): 
$$p^{FR}_{k,j} = 1/(K+1) \text{ for } 0\le k\le K $$
\item Thompson sampling (TS) \citep{thompson1933likelihood}:
$$p^{TS}_{k,j} = \frac{P(\theta_{k} = \max_{0\le h\le K} \theta_{h,j}|\mathfrak{D})^c}{\sum\limits_{k=0}^K P(\theta_{k} = \max_{0\le h\le K} \theta_{h,j}|\mathfrak{D})^c } \text{ for } 0\le k\le K, $$
where $c=\frac{n_j}{2N}$. 
\item Trippa et al. procedure (TP) \citep{trippa2012bayesian}:
\begin{equation*}
p^{TP}_{k,j} = \frac{\pi_{k,j}}{\sum\limits_{k=0}^K \pi_{k,j}  }, \text{ where }
\pi_{k,j} =  
  \left\{ \begin{array}{ll} 
 \frac{P(\theta_{0}>\theta_{k}|\mathfrak{D})^{\gamma_j}}{\sum\limits_{k=1}^K P(\theta_{0}>\theta_{k}|\mathfrak{D})^{\gamma_j} }, \hspace{0.1cm} k=1,\ldots, K \\ 
 \frac{1}{K} \exp[\max_{1\le k \le K} n_{k,j-1_+} - n_{0,j-1_+} ]^{\eta_j}, \hspace{0.1cm} k=0,  \\
\end{array} 
\right.
\end{equation*}

where $\gamma_j= 10(n_j/N)^{0.75}$, $\eta_j = 0.25(n_j/N)$ and $n_{k,j-1_+}$ is the number of samples in the $k$th arm collected up to stage $j-1$. 
\end{itemize}

\section{Final analysis and assessment}\label{sec:final}
In the final analysis, we assessed the proportion of patients assigned to the best arm(s), i.e., patient benefits, and the posterior probability of superiority (PPS) of the trial. PPS can be served as an alternative to statistical power, the higher PPS the more powered a method is. With one treatment arm $k$, this corresponds to the difference in mean transformed OSFD between treatment arm and control arm greater than a clinically meaningful threshold $\delta$: 
\begin{equation}
\label{final:eq1}
PPS = P(\theta_{k} - \theta_{0} >\delta | \mathfrak{D}_J ) ,
\end{equation} 
where $\mathfrak{D}_J$ is all the data collected up to the end of the study. The PPS is estimated over $R$ Monte Carlo replications as: 
\begin{equation*}
\widehat{PPS}=\frac{1}{R} \sum\limits_{r=1}^R \hat{P}_r(\theta_{k} - \theta_{0} >\delta | \mathfrak{D}_J ), 
\end{equation*}
where $R$ is the number of replications and $\hat{P}_r(\theta_{k} - \theta_{0} >\delta | \mathfrak{D}_J ) = \sum\limits_{b=1}^B \sum\limits_{b'=1}^B I(\theta^{(rb)}_k - \theta^{(rb')}_0 > \delta) / B^2$ is the posterior estimate of success probability in $r$th replication. 

In our study, we consider a total of $K$ treatment arms and one control arm. Depending on the type of the alternative hypothesis ($H_{a1}$, $H_{a2}$ or $H_{a3}$), the posterior probability of superiority can be calculated accordingly: 
$$ PPS(H_{a1}) = P(\theta_{best}- \theta_{0} > \delta | \mathfrak{D}_J); $$ 
$$ PPS(H_{a2}) = P\big\{\sum\limits_{k=1}^K I( \theta_{k} - \theta_{0} >\delta)\ge r | \mathfrak{D}_J\big\}; $$ 
$$ PPS(H_{a3}) = P(\theta_{worst}- \theta_{0} > \delta | \mathfrak{D}_J). $$ 
 
\section{Simulation studies}\label{sec:sim}

\subsection{Setting}\label{subsec:setting}
In this section, we evaluate the performance of our proposed Bayesian model and adaptation rules via simulations. We assumed a total sample size of $N=2,000$ patients in the trial. For each of the $J=10$ interim analyses, we will perform the adaptation of the treatment allocation rule. In the simulation, we assumed a total of $K=4$ treatment comparison arms with three active treatments and one control. For the control arm, we assumed $\mu_{0} =-2.3$ and $\sigma_{0} =0.8$ for the non-censored patients and a mortality rate of $\lambda_{0}= 20\%$ based on a pilot trial \citep{annane2018hydrocortisone}. Comparing to the control arm, we considered treatment arms with strong (S) effect size (Cohen's d=0.8) and medium (M) effect size (Cohen's d=0.5) and no effect (N). A strong effect corresponds to mean OSFD difference of 3.5 on the original scale, while a medium effect corresponds to mean OSFD difference of 2.5. We also considered different mortality rates in these arms, where $\lambda_k = 15\%$ for arms with strong effect and $\lambda_k = 18\%$ for medium and $\lambda_k = 20\%$ for no effect. The standard deviations for all treatment arms were kept the same as the control arm: $\sigma_{k}=0.8, 1\le k\le K$. We considered two levels of censoring proportions $\omega_{k} \in \{0.2,0.3\}, 0\le k\le K$ for comprehensive evaluation. Here we kept the censoring proportion the same for all arms. In real data, the censoring proportions may vary across arms and our model is flexible to handle such variation. We simulated the following five scenarios with varying treatment effects as follows: 
\begin{enumerate}
\item ``MNN'': one arm has medium effect, but all other arms have no effect.   
\item ``SNN'': one arm has strong effect, but all other arms have no effect.   
\item ``SMN'': one arm has strong effect and one arm has medium effect, and the remaining third arm has no effect.  
\item ``SMM'': one arm has strong effect, and the other two arms have medium effects.  
\item ``SSM'': two arms have strong effects, and the remaining arm has medium effect. 
 \end{enumerate}
In our simulation, we set the clinically difference $\delta=0.67$ and $\delta=0.65$ for $\omega_k=0.2$ and $\omega_k=0.3$, respectively. This corresponds to a mean OSFD difference of 1.5 on the original scale. 
 
We evaluated the posterior probability of superiority of the trial as well as the proportion of patients assigned to the best arm with our proposed adaptation rules. The posterior probability of superiority was assessed for all three alternative hypotheses: $H_{a1}$: at least one arm has effect; $H_{a2}$: at least two arms have effect; $H_{a3}$: all three arms have effect. All results were based on 5,000 replications of the trials. For each trial, we ran 2,000 MCMC samples and removed the first 500 samples in the burn-in period, and took every $10^{th}$ sample as the posterior samples for our final inference.

\subsection{Results}\label{subsec:results}
Figure 3 shows the proportion of patients assigned to the best performing arm(s) (with standard error) under different allocation rules in various simulation scenarios when $\omega=0.3$. When there were two arms with strong effects, we considered both arms as best performers and calculated the sum of proportion of patients assigned to both arms. 
Overall, SSM, SNN and MNN have a higher proportion of samples assigned to best arm(s). The proportion was higher when using our proposed adaptive allocation rules as compared to the competing rules in all scenarios. Rule II has the highest proportion among all allocation rules, around 10\% higher than Rule I and 25\% higher than Rule III on average. Rule II assigns patients to the best arm first at interim stages to ensure it has sufficient samples thus has the highest proportion. Under SSM with two best arms, the advantage of our proposed adaptive rules over other allocation rules was becoming smaller as TP and fixed randomization started to assign more patients to either of two arm(s). For alpha spending functions, using the OBF method resulted in more patients being assigned to the best arm(s) as compared to Pocock and power methods. 

\begin{figure}
\centering
\includegraphics[scale=0.55]{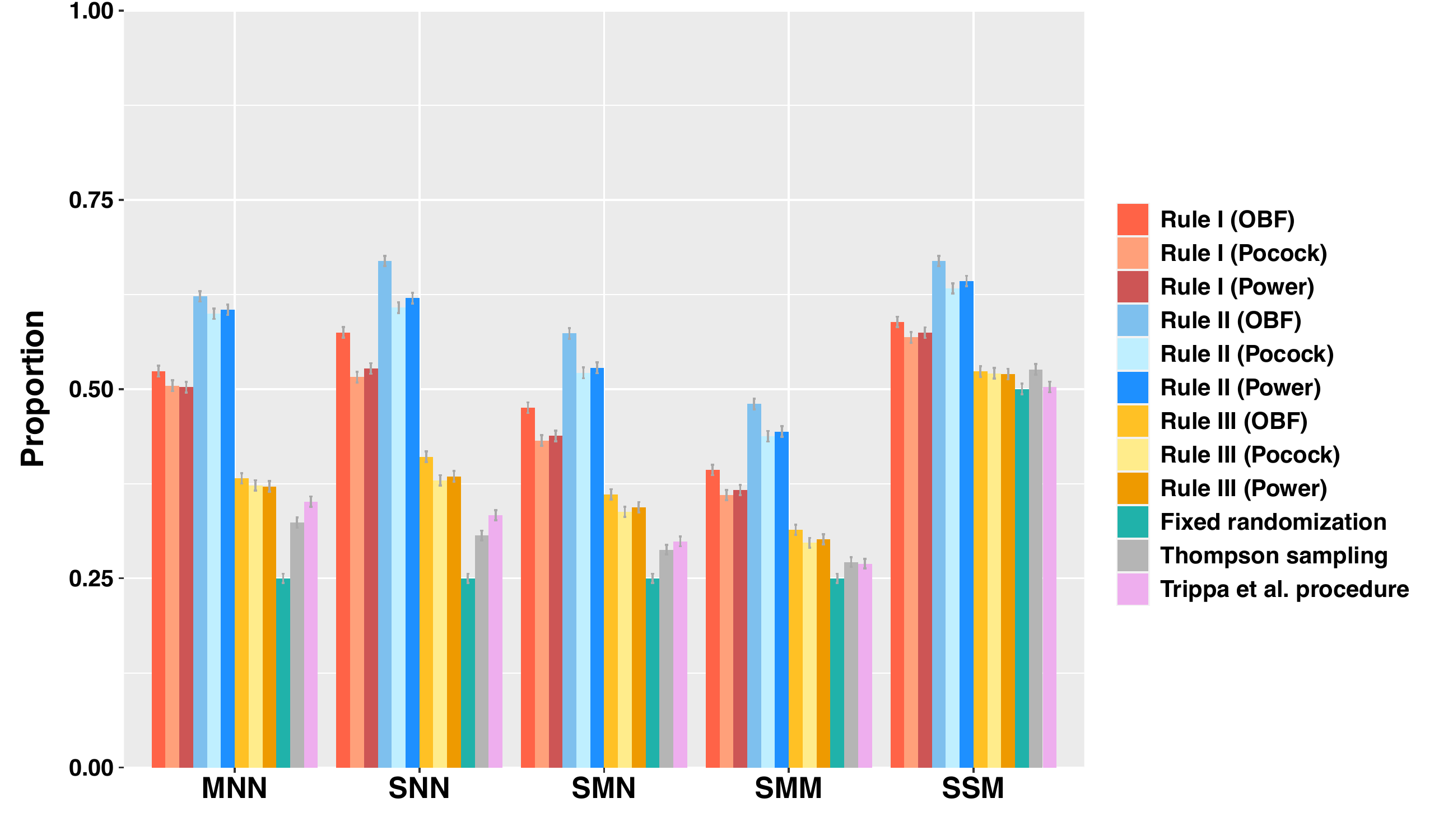}
\caption{The proportion of total patients assigned to the best performing arm(s) with standard error under different allocation rules when $\omega$ = 0.3.}
\label{fig3}
\end{figure}

Table 1 compares the posterior probability of superiority for all scenarios under different hypotheses settings when $\omega=0.3$. When there is only one dominating treatment arm (e.g. MNN and SNN), we only evaluated $H_{a1}$. When there are more than one dominating treatment arm (e.g. SMN, SMM and SSM), we also evaluated $H_{a2}$ and $H_{a3}$. In general, scenarios with strong effect arms and more contrast between treatment arms (e.g. SNN) have overall greater posterior probability of superiority. In all scenarios, our proposed Rule I and Rule III are more powered than the other allocation rules for the same alternative hypothesis. Rule III achieved the greatest posterior probability of superiority in most cases with a more balanced sample assignment between the best and the control arms. Though with the highest proportion assigned to the best arm(s), Rule II had the lowest posterior probability of superiority among the three adaptive allocation rules we proposed, but still a similar posterior probability of superiority as the other rules, i.e., Thompson sampling (TS) and Trippa et al. procedure (TP). 

\begin{table}[]
\footnotesize
\centering
\caption{Comparison of the posterior probability of superiority under different hypotheses settings.}
\begin{tabular}{p{0.4cm}p{0.3cm} p{0.7cm}p{0.7cm}p{0.7cm} p{0.7cm}p{0.7cm}p{0.7cm}  p{0.7cm}p{0.7cm}p{0.7cm}   p{0.7cm}p{0.7cm}p{0.7cm} }
\hline
    &     & \multicolumn{3}{l}{Rule I}     & \multicolumn{3}{l}{Rule II}    & \multicolumn{3}{l}{Rule III}   & \multirow{2}{*}{\begin{tabular}[c]{@{}l@{}} \\ FR \end{tabular}} & \multirow{2}{*}{\begin{tabular}[c]{@{}l@{}}\\TS \end{tabular}} & \multirow{2}{*}{\begin{tabular}[c]{@{}l@{}}\\ TP\end{tabular}} \\ \cline{3-11}
    &     & OBF    & Pocock & Power & OBF    & Pocock & Power & OBF    & Pocock & Power &                                                                                 &                                                                                    &                                                                                          \\ \hline
MNN & $H_{a1}$ & 0.805   & 0.810   & 0.811        & 0.756   & 0.760   & 0.763        & 0.827   & 0.834   & 0.825        & 0.797                                                                           & 0.809                                                                              & 0.823                                                                                    \\
    &     & (0.228) & (0.226) & (0.225)      & (0.249) & (0.251) & (0.248)      & (0.217) & (0.213) & (0.222)      & (0.225)                                                                         & (0.213)                                                                            & (0.221)                                                                                  \\ \hline
SNN & $H_{a1}$ & 0.964   & 0.963   & 0.961        & 0.937   & 0.936   & 0.935        & 0.974   & 0.970   & 0.970        & 0.929                                                                           & 0.940                                                                              & 0.950                                                                                    \\
    &     & (0.071) & (0.071) & (0.076)      & (0.111) & (0.115) & (0.113)      & (0.054) & (0.066) & (0.065)      & (0.100)                                                                         & (0.077)                                                                            & (0.072)                                                                                  \\ \hline
SMN & $H_{a1}$ & 0.955   & 0.957   & 0.956        & 0.921   & 0.925   & 0.924        & 0.966   & 0.963   & 0.965        & 0.937                                                                           & 0.936                                                                              & 0.946                                                                                    \\
    &     & (0.088) & (0.084) & (0.083)      & (0.131) & (0.128) & (0.126)      & (0.075) & (0.076) & (0.069)      & (0.083)                                                                         & (0.085)                                                                            & (0.078)                                                                                  \\
    & $H_{a2}$ & 0.671   & 0.690   & 0.687        & 0.647   & 0.659   & 0.659        & 0.687   & 0.701   & 0.694        & 0.691                                                                           & 0.686                                                                              & 0.669                                                                                    \\
    &     & (0.257) & (0.247) & (0.248)      & (0.268) & (0.265) & (0.265)      & (0.251) & (0.242) & (0.244)      & (0.225)                                                                         & (0.230)                                                                            & (0.243)                                                                                  \\ \hline
SMM & $H_{a1}$ & 0.925   & 0.926   & 0.925        & 0.899   & 0.899   & 0.896        & 0.933   & 0.930   & 0.931        & 0.908                                                                           & 0.900                                                                              & 0.906                                                                                    \\
    &     & (0.146) & (0.145) & (0.141)      & (0.160) & (0.160) & (0.165)      & (0.143) & (0.146) & (0.141)      & (0.139)                                                                         & (0.150)                                                                            & (0.155)                                                                                  \\
    & $H_{a2}$ & 0.737   & 0.751   & 0.743        & 0.715   & 0.725   & 0.719        & 0.754   & 0.765   & 0.756        & 0.743                                                                           & 0.732                                                                              & 0.721                                                                                    \\
    &     & (0.226) & (0.214) & (0.219)      & (0.238) & (0.233) & (0.239)      & (0.218) & (0.209) & (0.211)      & (0.193)                                                                         & (0.203)                                                                            & (0.217)                                                                                  \\
    & $H_{a3}$ & 0.471   & 0.507   & 0.492        & 0.473   & 0.495   & 0.493        & 0.474   & 0.512   & 0.498        & 0.524                                                                           & 0.506                                                                              & 0.474                                                                                    \\
    &     & (0.278) & (0.275) & (0.277)      & (0.280) & (0.281) & (0.284)      & (0.277) & (0.277) & (0.277)      & (0.243)                                                                         & (0.252)                                                                            & (0.266)                                                                                  \\ \hline
SSM & $H_{a1}$ & 0.947   & 0.946   & 0.948        & 0.922   & 0.921   & 0.922        & 0.957   & 0.952   & 0.951        & 0.934                                                                           & 0.931                                                                              & 0.934                                                                                    \\
    &     & (0.106) & (0.103) & (0.103)      & (0.130) & (0.134) & (0.130)      & (0.088) & (0.097) & (0.102)      & (0.093)                                                                         & (0.094)                                                                            & (0.104)                                                                                  \\
    & $H_{a2}$ & 0.821   & 0.822   & 0.825        & 0.801   & 0.799   & 0.797        & 0.841   & 0.832   & 0.834        & 0.813                                                                           & 0.806                                                                              & 0.800                                                                                    \\
    &     & (0.195) & (0.186) & (0.186)      & (0.211) & (0.209) & (0.207)      & (0.184) & (0.180) & (0.182)      & (0.168)                                                                         & (0.173)                                                                            & (0.190)                                                                                  \\
    & $H_{a3}$ & 0.565   & 0.592   & 0.587        & 0.572   & 0.583   & 0.578        & 0.578   & 0.593   & 0.590        & 0.600                                                                           & 0.587                                                                              & 0.564                                                                                    \\
    &     & (0.271) & (0.265) & (0.265)      & (0.273) & (0.273) & (0.273)      & (0.273) & (0.265) & (0.267)      & (0.240)                                                                         & (0.248)                                                                            & (0.261)                                                                                   \\ \hline
\end{tabular}
\flushleft Note: $H_{a1}$: at least one arm has treatment effect; $H_{a2}$: at least two arms have treatment effects; $H_{a3}$: all three arms have treatment effects. \\
MNN: one arm has medium effect, but all other arms have no effect.   \\
SNN: one arm has strong effect, but all other arms have no effect.   \\
SMN: one arm has strong effect and one arm has medium effect, and the remaining third arm has no effect.  \\
SMM: one arm has strong effect, and the other two arms have medium effects.  \\
SSM: two arms have strong effects, and the remaining arm has medium effect. \\
OBF: O'Brien-Fleming alpha spending function. \\
FR: Fixed randomization; TS: Thompson sampling; TP: Trippa et al. procedure.
 
\end{table}

\section{Discussion and Conclusion}

In this paper, we proposed a Bayesian response adaptive randomization design for a composite endpoint combining both mortality and morbidity. We proposed a two-component mixture model for this mixed types of outcome under a Bayesian framework and used Gibbs sampling to estimate the posterior distribution of the parameters. At each interim stage, the treatment allocation probability for each arm can be estimated based on the posterior probability that the arm is the best arm given the data collected up to the current stage. We conducted extensive simulations with varying treatment effects in different arms to evaluate the adaptive allocation rules we proposed. The results showed that our proposed allocation rules are more powerful and can allocate more patients to the best arm than the other existing allocation rules. We modified the Bayesian alpha spending functions proposed by \cite{zhu2017bayesian} to our design and applied it to control the overall false positive error rate. Simulation results also showed that more patients are assigned to the superior arm using our proposed adaptive allocation rules, with a greater posterior probability of superiority obtained at the same time. 

Our Bayesian approach assumes a mixture of spike and truncated normal to model the observed endpoint based on the exploration of pilot data from the Sepsis ENdotyping in Emergency Care (SENECA) project. In any chance the distributional assumption is violated, we can modify the model using nonparametric Bayesian Dirichlet mixture model or apply other popular nonparametric allocation rules such as win ratio \citep{pocock2012win}. 

Our methods do not take patient's baseline characteristics (e.g., disease stage, race) into consideration. To further investigate the impact of patient characteristics on treatment allocation rule, we will incorporate these baseline covariates in the Bayesian model in the future development. 

We developed an R package, BRACE, which is publicly available at https://github.com/joyxuuu/BRACE for implementation of our method.


\section*{Acknowledgments}
This is acknowledgment text. Provide text here. This is acknowledgment text. Provide text here. This is acknowledgment text. Provide text here. This is acknowledgment text. Provide text here. This is acknowledgment text. Provide text here. This is acknowledgment text. Provide text here. This is acknowledgment text. Provide text here. This is acknowledgment text. Provide text here. This is acknowledgment text. Provide text here. 

\subsection*{Author contributions}

This is an author contribution text. This is an author contribution text. This is an author contribution text. This is an author contribution text. This is an author contribution text. 

\subsection*{Financial disclosure}

None reported.

\subsection*{Conflict of interest}

The authors declare no potential conflict of interests.

\section*{Supporting information}

The following supporting information is available as part of the online article:

\noindent
\textbf{Figure S1.}
{500{\uns}hPa geopotential anomalies for GC2C calculated against the ERA Interim reanalysis. The period is 1989--2008.}

\noindent
\textbf{Figure S2.}
{The SST anomalies for GC2C calculated against the observations (OIsst).}

\appendix

\section{. Full conditional posterior distribution for Gibbs sampling}

We used Gibbs sampling to update all parameters in the model, the corresponding full conditional posterior distributions for each parameter are summarized below.

$$ (\lambda_{k} | \mu_{k}, \sigma^2_{k},\omega_{k}, z_{ik}, D_{ik}, \tau_{ik}) \propto  Beta\big( 1+ \sum_i \tau_{ik}, 1+ n - \sum_i \tau_{ik} \big) $$

$$ (\omega_{k} | \lambda_{k}, \mu_{k}, \sigma^2_{k}, z_{ik}, D_{ik},\tau_{ik}) \propto Beta\big( 1+ \sum_i z_{ik}, 1+ n - \sum_i z_{ik} \big)  $$

$$ (z_{ik} | \lambda_{k}, \mu_{k}, \sigma^2_{k}, \omega_{(k)}, D_{ik},\tau_{ik}) \propto Ber \big[ \frac{\omega_{k}}{\omega_{k}+(1-\omega_{k})\times TN\{D_{ik} |\mu_{k}, \sigma^2_{k},\lceil \log(30) \rceil-\log(30),\lceil \log(30) \rceil-\log(2)\}} \big]  $$

$$ (\mu_{k} | \lambda_{k} , \sigma^2_{k}, \omega_{k}, z_{ik}, D_{ik},\tau_{ik}) \propto  N \big\{  \frac{1}{\frac{1}{10^4} + \frac{n}{\sigma^2_{k}}} ( \frac{{\sum_{i=1}^n}D_{ik} } {\sigma^2_{k}}) , \frac{1}{{ \frac{1}{10^4 }} + \frac{n}{\sigma^2_{k}} } \big\} $$   

$$ (\sigma^2_{k}| \lambda_{k} , \mu_{k} , \omega_{k}, z_{ik}, D_{ik},\tau_{ik}) \propto IG \big\{  10^{-4} + \frac{n}{2}, 10^{-4} + \frac{ {\sum_{i=1}^n}{( D_{ik} - \mu_{k})^2} }{2}  \big\}.$$

\section{. Derivation of variance of transformed OSFD}

The variance of $D_{ik}$ can be derived as follows:

\begin{equation}
\begin{split} 
Var(D_{ik}) & = E\big[\{D_{ik}-E(D_{ik})\}^2|\tau_{ik}=0)\big]P(\tau_{ik}=0)  + E\big[\{D_{ik}-E(D_{ik})\}^2|\tau_{ik}=1)\big]P(\tau_{ik}=1)  \\ 
          &  = \big[E(D^2_{ik}|\tau_{ik}=0)- E^2(D_{ik}|\tau_{ik}=0)\big](1-\lambda_k) + 0 * \lambda_k \\  
          &  = \big [ \omega_k*\{\lceil \log(30) \rceil-\log(2)\}^2 + (1 - \omega_k )(E^2_{TN(\mu_k,\sigma^2_k)} + Var_{TN(\mu_k,\sigma^2_k)})  - \theta^2_k\big] (1-\lambda_k)  \\ 
          &  = \big[\{\lceil \log(30) \rceil-\log(2)\}^2\omega_k + (1 - \omega_k )\{ (\mu_k - \sigma_k  R_{TN,k})^2 + \sigma^2_k (1+ {WR}_{TN,k} - R^2_{TN,k})  \}     \\ 
          & \hspace{0.3cm}  - \{ \lceil \log(30) \rceil-\log(2)\omega_k + (1-\omega_k)( \mu_k - \sigma_k  R_{TN,k} ) \}^2 \big] (1-\lambda_k) , \\
 \end{split}
\end{equation}
where 
$R_{TN,k}= \frac{  \phi\{\frac{\lceil \log(30) \rceil-\log(2)-\mu_k}{\sigma_k}\}  - \phi\{\frac{\lceil \log(30) \rceil-\log(30)-\mu_k}{\sigma_k}\} } { \Phi\{\frac{\lceil \log(30) \rceil-\log(2)-\mu_k}{\sigma_k}\}  - \Phi\{\frac{\lceil \log(30) \rceil-\log(30)-\mu_k}{\sigma_k}\}  } $ and \\
${WR}_{TN,k} = \frac{ \{\frac{\lceil \log(30) \rceil-\log(2)-\mu_k}{\sigma_k}\} \phi\{\frac{\lceil \log(30) \rceil-\log(2)-\mu_k}{\sigma_k}\}  - \{\frac{\lceil \log(30) \rceil-\log(30)-\mu_k}{\sigma_k}\} \phi\{\frac{\lceil \log(30) \rceil-\log(30)-\mu_k}{\sigma_k}\} } { \Phi\{\frac{\lceil \log(30) \rceil-\log(2)-\mu_k}{\sigma_k}\}  - \Phi\{\frac{\lceil \log(30) \rceil-\log(30)-\mu_k}{\sigma_k}\}  } $.

\section{. Proof of theorems}

\begin{lemma}{Doob's consistency Theorem \citep{doob1953stochastic}}. Suppose $X_1,\ldots,X_n$ are i.i.d. drawn from a correctly specified model $\{p(X|\gamma),\gamma \in \Upsilon \}$ with $\mathbf{P}_{\gamma^*}$ for some true parameter $\gamma^*$ and $\gamma^*$ is drawn from the prior $\mathbf{\pi}(\gamma)$, then the posterior mean $\hat{\gamma} \overset{p}{\to} \gamma^*$.  
\end{lemma}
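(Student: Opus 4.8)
The statement is the classical posterior-consistency theorem of Doob, specialized to the i.i.d.\ parametric setting, so the plan is to reproduce its martingale-based proof. First I would fix a single joint probability space on which $\gamma\sim\pi$ and, conditionally on $\gamma$, the observations $X_1,X_2,\dots$ are i.i.d.\ $\mathbf{P}_{\gamma}$. Writing $\mathcal{F}_n=\sigma(X_1,\dots,X_n)$ and $\mathcal{F}_\infty=\sigma(\bigcup_n\mathcal{F}_n)$, the posterior mean is exactly the conditional expectation $\hat{\gamma}_n=E[\gamma\mid\mathcal{F}_n]$, so the entire problem reduces to a statement about the convergence of conditional expectations along the filtration $(\mathcal{F}_n)_{n\ge 1}$.

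The crux is an identifiability-to-measurability step. Because the family $\{p(X\mid\gamma):\gamma\in\Upsilon\}$ is correctly specified and identifiable, the map $\gamma\mapsto\mathbf{P}_\gamma$ is injective. By the strong law of large numbers applied to a countable determining class of bounded functions (equivalently, the Glivenko--Cantelli theorem for the empirical measure), the empirical distribution $\hat{\mathbf{P}}_n$ converges to $\mathbf{P}_\gamma$ almost surely; composing this limit with a measurable inverse of the injective map $\gamma\mapsto\mathbf{P}_\gamma$ exhibits a measurable $h$ with $\gamma=h(X_1,X_2,\dots)$ almost surely, i.e.\ $\gamma$ is $\mathcal{F}_\infty$-measurable up to a $\pi$-null set. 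This is the step I expect to be the main obstacle: constructing the measurable inverse rigorously requires $\Upsilon$ to be a standard Borel (Polish) space with $\gamma\mapsto\mathbf{P}_\gamma$ Borel, and for the specific spike-and-truncated-normal mixture of Equation~\eqref{method:eq1} one must separately verify that the censoring weight $\omega_k$, the death probability $\lambda_k$, and the truncated-normal parameters $(\mu_k,\sigma^2_k)$ are jointly identifiable from the law of $(D_{ik},\tau_{ik})$.

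Granting that $\gamma$ is $\mathcal{F}_\infty$-measurable, the remaining step is an application of L\'evy's upward martingale convergence theorem. For any bounded measurable $f$, the sequence $E[f(\gamma)\mid\mathcal{F}_n]$ is a uniformly integrable martingale, hence converges almost surely and in $L^1$ to $E[f(\gamma)\mid\mathcal{F}_\infty]=f(\gamma)$, the last equality holding precisely because $\gamma$ is $\mathcal{F}_\infty$-measurable. Taking $f$ to be (a truncation of) each coordinate of $\gamma$, with integrability supplied by the prior or by a truncation-and-limit argument, yields $\hat{\gamma}_n=E[\gamma\mid\mathcal{F}_n]\to\gamma$ almost surely, which is posterior-mean consistency on the joint space.

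Finally I would translate this into the stated conclusion. Doob's argument delivers almost-sure convergence for $\pi$-almost every $\gamma^*$, i.e.\ outside a $\pi$-null exceptional set of parameters; since the true $\gamma^*$ is itself drawn from the prior $\pi$, that exceptional set is hit with probability zero, and almost-sure convergence implies the asserted convergence in probability $\hat{\gamma}\overset{p}{\to}\gamma^*$ under $\mathbf{P}_{\gamma^*}$. I would add a remark that the ``$\pi$-almost every'' qualifier is intrinsic to this approach, and that removing the exceptional null set for a particular fixed $\gamma^*$ would require the stronger Kullback--Leibler-support-plus-uniformly-consistent-tests machinery (Schwartz's theorem) rather than the martingale argument sketched here.
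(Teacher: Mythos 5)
The paper never actually proves this lemma: it is quoted as a known result, and the accompanying remark explicitly defers the proof to \cite{van2000asymptotic}. Your proposal reconstructs precisely the argument that lives in that reference --- Doob's martingale proof: identify the posterior mean with $E[\gamma\mid\mathcal{F}_n]$ on the joint space of $(\gamma, X_1, X_2, \ldots)$, use identifiability plus the law of large numbers (with a Kuratowski-type measurable-inverse argument on a standard Borel parameter space) to show $\gamma$ is $\mathcal{F}_\infty$-measurable up to null sets, invoke L\'evy's upward martingale convergence, and observe that the $\pi$-null exceptional set of parameters is harmless because $\gamma^*$ is itself drawn from $\pi$. So you and the paper ultimately rest on the same mathematics; what your write-up adds is self-containedness and an explicit accounting of hypotheses that the paper's bare statement suppresses: identifiability of the spike-plus-truncated-normal mixture in Equation~\eqref{method:eq1}, Borel structure on $\Upsilon$ and measurability of $\gamma \mapsto \mathbf{P}_\gamma$, and the intrinsically $\pi$-almost-everywhere character of the conclusion (your closing remark contrasting this with Schwartz-type results is accurate and worth keeping).

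One point deserves more than the passing mention you give it. The L\'evy step applied directly to the coordinates of $\gamma$ requires prior integrability, $E_\pi|\gamma| < \infty$, and under the paper's actual prior $\sigma^2_k \sim IG(10^{-4}, 10^{-4})$ the prior mean is infinite (the shape parameter is below $1$), so for this model the posterior-mean version of Doob's theorem does not follow from the bounded-$f$ case by an unexamined ``truncation-and-limit'' step: after truncating you must still control the posterior mass escaping to infinity, which requires uniform integrability of the posteriors or a separate tail argument. This is a genuine loose end in your sketch, though it is equally a loose end in the paper's statement, which asserts posterior-mean consistency while citing a theorem whose textbook form concerns weak convergence of the posterior distribution to $\delta_{\gamma^*}$ rather than convergence of posterior means.
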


\begin{remark}
A proof of this theorem is provided in \cite{van2000asymptotic}, we do not repeat it here. Doob's consistency Theorem implies that the posterior mean of a parameter with finite dimension in a correctly specified model is consistent.
\end{remark}

\begin{lemma} {Bernstein-von Mises Theorem \citep{van2000asymptotic}}. For a well-specified regular parametric model $\{ p(X|\Delta), \gamma \in \Upsilon \}$ with $\mathbf{P}_{\gamma^*}$ under certain regularity conditions and with nonsingular Fisher information matrix $\mathbf{I}_{\gamma^*}$, suppose the prior density $\pi(\gamma)$ is continuous in a neighborhood of $\gamma^*$, then the corresponding posterior distribution satisfy:
  
$$||\mathbf{P}_{\sqrt{n}(\hat{\gamma}_n - \gamma^* )|X_1,\ldots,X_n} - N(\nu_{n,\gamma^*},\mathbf{I}^{-1}_{\gamma^*} ) || \overset{\mathbf{P}^n_{\gamma^*}}{\to} 0. $$  

\noindent As $n \rightarrow \infty$, $\nu_{n,\gamma^*}$ weakly converges to $N(0,\mathbf{I}^{-1}_{\gamma^*} )$. 
\end{lemma}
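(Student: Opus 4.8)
The plan is to prove the statement by the classical route for the Bernstein--von Mises theorem, reparametrizing to the local parameter $h=\sqrt{n}(\gamma-\gamma^*)$ and showing that the posterior law of $h$ is asymptotically Gaussian in total variation. The engine is local asymptotic normality (LAN): under the stated regularity the model $\{p(X|\gamma)\}$ is differentiable in quadratic mean at $\gamma^*$ with score $\dot\ell_{\gamma^*}$ and nonsingular information $\mathbf{I}_{\gamma^*}$, so the log-likelihood ratio admits the expansion
\begin{equation*}
\log \prod_{i=1}^n \frac{p(X_i|\gamma^*+h/\sqrt{n})}{p(X_i|\gamma^*)} = h^T \Delta_{n,\gamma^*} - \tfrac{1}{2} h^T \mathbf{I}_{\gamma^*} h + o_{\mathbf{P}_{\gamma^*}}(1),
\end{equation*}
where $\Delta_{n,\gamma^*}=n^{-1/2}\sum_{i=1}^n \dot\ell_{\gamma^*}(X_i) \overset{d}{\to} N(0,\mathbf{I}_{\gamma^*})$ by the central limit theorem. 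I would first establish this expansion (or invoke it directly from the regularity hypotheses), since every subsequent step rests on it.

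Next I would write the posterior density of $h$. By Bayes' rule it is proportional to $\pi(\gamma^*+h/\sqrt{n})\exp\{\text{LAN exponent}\}$, normalized by its integral over $h$. Substituting the LAN expansion and using continuity of $\pi$ at $\gamma^*$ to replace $\pi(\gamma^*+h/\sqrt{n})$ by $\pi(\gamma^*)$ (which cancels between numerator and normalizing denominator), completing the square in $h$ shows that, pointwise in $h$ and on every fixed compact set, the posterior density of $h$ converges to the density of $N(\mathbf{I}^{-1}_{\gamma^*}\Delta_{n,\gamma^*},\mathbf{I}^{-1}_{\gamma^*})$. This identifies the centering $\nu_{n,\gamma^*}=\mathbf{I}^{-1}_{\gamma^*}\Delta_{n,\gamma^*}$; since $\Delta_{n,\gamma^*}\overset{d}{\to} N(0,\mathbf{I}_{\gamma^*})$, the continuous mapping theorem gives $\nu_{n,\gamma^*}\overset{d}{\to} N(0,\mathbf{I}^{-1}_{\gamma^*})$, which is the second assertion of the lemma.

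To upgrade this local, pointwise density convergence to convergence in total variation --- the norm $\|\cdot\|$ in the statement --- I would proceed in two steps. First, a concentration (testing) argument: using identifiability, the nonsingularity of $\mathbf{I}_{\gamma^*}$, and the existence of uniformly consistent tests of $\gamma^*$ against its complements, one shows the posterior assigns asymptotically negligible mass to $\{\|h\|>M_n\}$ for slowly growing $M_n$, so that only the local LAN regime contributes. Second, on the local scale I would apply Scheff\'e's lemma: pointwise convergence of the posterior density of $h$ to a fixed Gaussian density, together with the fact that both are probability densities, yields $L^1$ and hence total-variation convergence. Combining the two steps gives the stated convergence in $\mathbf{P}^n_{\gamma^*}$-probability.

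I expect the tail-control step to be the main obstacle. The LAN expansion and Scheff\'e's lemma handle the behavior near $\gamma^*$ cleanly, but showing that the posterior does not retain nonvanishing mass far from $\gamma^*$ requires the genuinely global input of the regularity conditions --- identifiability of the mixture, a separated likelihood, and the construction of consistent tests --- and it is precisely here, rather than in the quadratic approximation, that care is needed. For the specific two-component truncated-normal mixture of Equation \eqref{method:eq1} one would additionally have to verify differentiability in quadratic mean and nonsingularity of $\mathbf{I}_{\gamma^*}$; but since the lemma is stated under these conditions as hypotheses, I would take them as given and concentrate the effort on the concentration-plus-Scheff\'e passage to total variation.
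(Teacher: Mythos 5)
The paper itself gives no proof of this lemma---the accompanying remark simply cites \cite{van2000asymptotic} and declines to repeat the argument---so the relevant comparison is with Theorem 10.1 of that reference, whose proof (LAN expansion at $\gamma^*$ under differentiability in quadratic mean, identification of the centering $\nu_{n,\gamma^*}=\mathbf{I}^{-1}_{\gamma^*}\Delta_{n,\gamma^*}$ with $\Delta_{n,\gamma^*}$ the normalized score, posterior tail control via uniformly consistent tests, and passage to total variation on the local scale) is precisely the route you outline; your proposal is correct and essentially reproduces the cited argument. The one point your sketch elides is that the comparison Gaussian has the \emph{random} center $\nu_{n,\gamma^*}$, so the Scheff\'e step cannot be applied verbatim with a fixed limit density: you must either run it conditionally along subsequences or bound the expected $L^1$ distance between the posterior density of $h$ and the data-dependent Gaussian density directly, which is exactly the refinement handled by the lemmas surrounding Theorem 10.1 in \cite{van2000asymptotic}.
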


\begin{remark}
A proof of this theorem is provided in \cite{van2000asymptotic}. We do not repeat it here. The Bernstein-von Mises theorem links Bayesian inference with frequentist inference. It shows that the posterior laws converge in distribution to a Gaussian posterior law in total variation distance. This immediately implies that any location functional suitably continuous relative to the total variation norm applied to the sequence of posterior law converges to the same location functional applied to the limiting Gaussian posterior distribution. For most choices, this means $N(0,\mathbf{I}^{-1}_{\gamma^*} )$. Suppose the posterior mean is defined as $\hat{\gamma}= \int \gamma p(\gamma|X) d\gamma$. An immediate result from the Bernstein-von Mises theorem implies the asymptotic normal distribution of $\hat{\gamma}$, i.e., $\sqrt{n} (\hat{\gamma} - \gamma^*) \sim N(0, \mathbf{I}^{-1}_{\gamma^*}) $. Basically, the Bernstein-von Mises theorem establishes that $\hat{\gamma}$ is on the same footing as the maximum likelihood estimate of $\gamma$.   
\end{remark}

\subsection{Proof of Theorem 2}
\begin{proof}
The model meets the general regularity conditions and the prior $\pi(\Delta)$ is continuous, so the asymptotic normality results of $\hat{\Delta}_k$ directly follow the Bernstein-von Mises theorem, i.e., $\sqrt{n} (\hat{\Delta}_k - \Delta^*) \overset{d}{\to} N(0,\mathbf{I}^{-1}_{\Delta^*})$. 
What we show here is the derivation of the nonsingular Fisher's information matrix thus the asymptotic variance.

We first write out the full likelihood of the model (let $\Delta_k = (\lambda_k,\omega_k,\mu_k,\sigma^2_k)$):
\begin{equation}
 \begin{split}
 \mathcal{L}(\Delta_k|D_{ik},\tau_{ik}) & = \prod\limits_{i=1}^n f(D_{ik},\tau_{ik})  \\
   & =   \prod\limits_{i=1}^n \{ f(D_{ik}|\tau_{ik}=0)P(\tau_{ik}=0) + f(D_{ik}|\tau_{ik}=1)P(\tau_{ik}=1) \} \\
   & =  \prod\limits_{i=1}^n (1-\lambda_k)\big[ \omega_k + (1-\omega_k) TN\{D_{ik};\mu_k,\sigma^2_k,\lceil \log(30) \rceil-\log(30),\lceil \log(30) \rceil-\log(2)\}\big].
 \end{split}
\end{equation}

Taking the log, we get the log-likelihood:
\begin{equation}
\ell(\Delta_k|D_{ik},\tau_{ik})=n\log (1-\lambda_k) + \sum\limits_{i=1}^n \log \big[ \omega_k + (1-\omega_k) TN\{D_{ik};\mu_k,\sigma^2_k,\lceil \log(30) \rceil-\log(30),\lceil \log(30) \rceil-\log(2)\}\big], 
\end{equation}
where $TN\{D_{ik};\mu_k,\sigma^2_k,\lceil \log(30) \rceil-\log(30),\lceil \log(30) \rceil-\log(2)\} = \frac{\phi(\frac{D_{ik}-\mu_{k}}{\sigma_k})}{\sigma_k\big[\Phi\{\frac{\lceil \log(30) \rceil-\log(2)-\mu_k}{\sigma_k}\}- \Phi\{\frac{\lceil \log(30) \rceil-\log(30)-\mu_k}{\sigma_k}\}    \big]}$, $\phi(.)$ and $\Phi(.)$ stand for the PDF and CDF of the standard normal distribution, respectively. 

We then derive the score function:
\begin{equation}
 \begin{split}
 S(\Delta_k) & = \begin{pmatrix} 
                      \frac{\partial \ell(\Delta_k) }{\partial \lambda_k } & \\
                      \frac{\partial \ell(\Delta_k) }{\partial \omega_k } & \\
                      \frac{\partial \ell(\Delta_k) }{\partial \mu_k } & \\
                      \frac{\partial \ell(\Delta_k) }{\partial \sigma^2_k } & \\
                  \end{pmatrix}  =  \begin{pmatrix} 
                       - \frac{ n }{ \lambda_k } & \\
   \sum\limits_{i=1}^n   \frac{ 1- \psi({D_{ik}}) }{  \omega_k + (1-\omega_k)\psi({D_{ik}}) } & \\
   \sum\limits_{i=1}^n   \frac{ (1- \omega_k)\psi'_1({D_{ik}}) }{  \omega_k + (1-\omega_k)\psi({D_{ik}}) } & \\
      \sum\limits_{i=1}^n   \frac{ (1- \omega_k)\psi'_2({D_{ik}}) }{  \omega_k + (1-\omega_k)\psi({D_{ik}}) } & \\
                   \end{pmatrix}  \\    
 \end{split},
\end{equation}
where $\psi(.)$ is the PDF of $TN(\mu_k,\sigma^2_k,\lceil \log(30) \rceil-\log(30),\lceil \log(30) \rceil-\log(2))$, $\psi'_1({D_{ik}}) = \frac{\partial \psi({D_{ik}}) }{\partial \mu_k }$, $\psi'_2({D_{ik}}) = \frac{\partial \psi({D_{ik}}) }{\partial \sigma^2_k }$. We also know for each ith case, $s_i(\Delta_k)= \{ - \frac{ 1 }{ \lambda_k }, \frac{ 1- \psi({D_{ik}}) }{  \omega_k + (1-\omega_k)\psi({D_{ik}}) }, \frac{ (1- \omega_k)\psi'_1({D_{ik}}) }{  \omega_k + (1-\omega_k)\psi({D_{ik}}) }, \frac{ (1- \omega_k)\psi'_2({D_{ik}}) }{  \omega_k + (1-\omega_k)\psi({D_{ik}}) }  \}^T $. 

Assuming i.i.d., we can further derive Fisher's information matrix:
\begin{equation}
 \begin{split}
& I(\Delta_k) = - E\left\{\frac{\partial s_i(\Delta_k)}{\partial \Delta_k }\right\} = E \begin{pmatrix} 
              - \frac{1}{\lambda^2_k} &  0 &  0 &  0 \\ 
            0 &  I_{i,22} &  I_{i,23} &  I_{i,24}   \\
0 &  I_{i,23}  & I_{i,33}  &  I_{i,34} \\                                                
0 &  I_{i,24} &  I_{i,34} & I_{i,44} \\ 
                  \end{pmatrix}      
 \end{split},
\end{equation}
where $\psi_i = \psi({D_{ik}})$, $\psi'_{i1} = \psi'_1({D_{ik}})$, $\psi'_{i2} = \psi'_2({D_{ik}})$, $\psi_{i1}^{''}= \frac{\partial \psi'_1({D_{ik}})}{ \partial \mu_k}$,$\psi_{i2}^{''}= \frac{\partial \psi'_2({D_{ik}})}{ \partial \sigma^2_k}$, $\psi_{i12}^{''}= \frac{\partial \psi({D_{ik}})}{ \partial \mu_k \partial \sigma^2_k}$, $W_{\psi}=\omega_k +(1-\omega_k)\psi_i$, $I_{i,22}=\frac{(1-\psi_i)^2}{W_\psi^2}$, $I_{i,23}=\frac{W_\psi(1-\psi'_{i1})-(1-\omega_k)(1-\psi_i) \psi'_{i1} }{W_\psi^2}$, $I_{i,24}=\frac{W_\psi(1-\psi'_{i2})-(1-\omega_k)(1-\psi_i) \psi'_{i2} }{W_\psi^2}$, $I_{i,33}=\frac{W_\psi(1-\omega_k)\psi^{''}_{i1}-(1-\omega_k)^2\psi^{''2}_{i1}}{W_\psi^2}$, $I_{i,34}=\frac{W_\psi(1-\omega_k)\psi^{''}_{i2}-(1-\omega_k)^2\psi'_{i1} \psi'_{i2}}{W_\psi^2}$, and $I_{i,44}=\frac{W_\psi(1-\omega_k)\psi^{''}_{i2}-(1-\omega_k)^2\psi^{''2}_{i2}}{W_\psi^2}$. We have completed the derivation of a nonsingular information matrix $I(\Delta_k)$. 

Since $\hat{\theta}_k=g(\hat{\Delta}_k)$ is a continuous function of $\hat{\Delta}_k$, we can apply the multivariate Delta's method and obtain the asymptotic normal distribution of $\hat{\theta}_k$: $ \sqrt{n} (\hat{\theta}_k - \theta^*_k)  \overset{d}{\to} N\big\{0, \nabla \theta_k(\Delta_k)^T \mathbf{I}^{-1}_{\Delta^*}  \nabla \theta_k(\Delta_k) \big\} $. 

What we need to do here is just to derive $\nabla \theta_k(\Delta_k)$:
\begin{equation}
 \begin{split}
  \nabla \theta_k(\Delta_k) & = \begin{pmatrix} 
                      \frac{\partial \theta_k }{\partial \lambda_k } & \\
                      \frac{\partial \theta_k }{\partial \omega_k } & \\
                      \frac{\partial \theta_k }{\partial \mu_k } & \\
                      \frac{\partial \theta_k }{\partial \sigma^2_k } & \\
                  \end{pmatrix}
                = \begin{pmatrix} 
                     -\big[\{\lceil \log(30) \rceil-\log(2)\}\omega_k + (1-\omega_k)( \mu_k - \sigma_k R_{TN})\big]  & \\
                      (1-\lambda_k)\big\{ \lceil \log(30) \rceil-\log(2) - (\mu_k - \sigma_k R_{TN})\big\}  & \\
                     (1- \lambda_k)(1-\omega_k) ( 1 - \sigma_k R^{1'}_{TN} )  & \\                   
                    (1- \lambda_k)(1-\omega_k) ( - \frac{1}{2 \sigma_k} R_{TN} -  \sigma_k R^{2'}_{TN} )  & \\
                  \end{pmatrix}     
 \end{split},
\end{equation}
where $R_{TN,k}= \frac{  \phi\{\frac{\lceil \log(30) \rceil-\log(2)-\mu_k}{\sigma_k}\}  - \phi\{\frac{\lceil \log(30) \rceil-\log(30)-\mu_k}{\sigma_k}\} } { \Phi\{\frac{\lceil \log(30) \rceil-\log(2)-\mu_k}{\sigma_k}\}  - \Phi\{\frac{\lceil \log(30) \rceil-\log(30)-\mu_k}{\sigma_k}\}  } $, $R^{1'}_{TN,k} = \frac{\partial R_{TN,k}}{\partial \mu_k}$, and $R^{2'}_{TN,k} = \frac{\partial R_{TN,k}}{\partial \sigma^2_k}$.

\end{proof}

\bibliography{BayesianRAR_SIM}%

\end{document}